\documentclass[12pt,notitlepage]{iopart}

\usepackage[latin1]{inputenc}
\usepackage{subfigure}
\usepackage{graphicx}
\usepackage{dsfont}

\usepackage{rotating}

\newcommand{\R}{\mathds{R}}
\newcommand{\N}{\mathds{N}}
\newcommand{\1}{\mathds{1}}

\newcommand{\ket}[1]{| #1 \rangle}

\usepackage{amsthm}
\newtheorem{theorem}{Theorem}
\newtheorem{corollary}{Corollary}

\newtheorem*{theorem*}{Theorem}
\theoremstyle{definition}
\newtheorem{example}{Example}

\usepackage{hyperref}
\hypersetup{
    colorlinks,
    citecolor=black,
    filecolor=black,
    linkcolor=black,
    urlcolor=black
}

\makeatletter
\def\Ddots{\mathinner{\mkern1mu\raise\p@
\vbox{\kern7\p@\hbox{.}}\mkern2mu
\raise4\p@\hbox{.}\mkern2mu\raise7\p@\hbox{.}\mkern1mu}}
\makeatother

\begin{document}

\title{Optimisation of Bell inequalities with invariant Tsirelson bound}

\author{M Epping$^1$, H Kampermann$^{1}$ and D Bru\ss$^{1}$}
\address{$^1$ Institut f\"{u}r Theoretische Physik III, Heinrich-Heine-Universit\"{a}t D\"{u}sseldorf, Universit\"{a}tsstrasse 1, D-40225
D\"{u}sseldorf, Germany}
\ead{\mailto{epping@thphy.uni-duesseldorf.de}}
\date{\today}


\begin{abstract}
We consider a subclass of bipartite CHSH-type Bell inequalities. We investigate operations, which leave their Tsirelson bound
invariant, but change their classical bound. The optimal observables are unaffected except for a relative rotation of the two laboratories.
We illustrate the utility of these operations by giving explicit
examples: We prove that for a fixed quantum state and fixed measurement setup except for a relative rotation of the two laboratories, there
is a Bell inequality that is maximally violated for this rotation, and we optimise some Bell inequalities with respect to the maximal
violation. Finally we optimise the qutrit to qubit ratio of
some dimension witnessing Bell inequalities.
\end{abstract}
{\let\newpage\relax\maketitle}
\vfill
\tableofcontents
\vfill
\clearpage

\markboth{Optimisation of Bell inequalities with invariant Tsirelson bound}{}

\section{Introduction}\label{sec:introduction}
Originally, John S. Bell introduced what we now call Bell inequalities in order to show that the ideas of locality and realism are
incompatible with statistical predictions of quantum theory~\cite{Bell64}. Thus, the question whether a completion of quantum theory obeying
these axioms exists, as proposed by Einstein, Podolski and Rosen~\cite{EPR35}, was brought to an experimentally testable level. Now,
fifty years later, there is very strong experimental evidence that Bell inequalities can be violated by
nature~\cite{Aspect82,Zeilinger98,Wineland01,Ansmann09,Christensen2013,Shadbolt2012,2013arXiv1309.1379E,2013arXiv1312.4810L}, which implies that not
all axioms in the derivation of Bell inequalities are followed by nature. Nevertheless Bell inequalities are not water under the bridge
yet. This is amongst other reasons due to several interesting applications, like quantum key distribution, where the
violation of Bell
inequalities is a test for eavesdropping~\cite{Ekert91}. Here and in other applications, the amount of violation becomes important and a stronger
violation of the inequality is usually beneficial (e.g. noise is less corruptive or the gap between classical and quantum performance
increases).\\
In the present paper, we discuss two methods to modify Bell inequalities, which change the classical bound but leave the maximal value
achievable
in quantum theory unchanged. These methods can be used to optimise Bell inequalities with respect to the possible amount of violation.
Various research on Bell inequalities with a large amount of violation has been carried
out~\cite{2010arXiv1012.5043B,Junge2011,PhysRevLett.103.180402,PhysRevA.82.042334}, but literature on the specific problem investigated in
this paper is less extensive~\cite{Fishburn1994,PhysRevA.77.032108}.\\
We specify the Bell inequalities under consideration in the following
Section~\ref{sec:subclass}. Then we formulate the above-mentioned methods as a
Corollary in Section~\ref{sec:methods} and give examples for their utility in Section~\ref{sec:applications}. Section~\ref{sec:conclusions}
concludes this paper.
\section{A subclass of CHSH-type Bell inequalities}\label{sec:subclass}
We consider bipartite full correlation Bell inequalities (CHSH type Bell inequalities~\cite{Clauser1969,WernerWolf01}) with $M_i$
measurement settings at the site of party $i$. These
settings are labelled $x_i=1,2,...,M_i$. Such Bell inequalities can be written in the form
\begin{equation}
 \sum_{x_1,x_2=1}^{M_1,M_2} g_{x_1,x_2} E(x_1,x_2) \leq B, \label{eq:BIwithoutmax}
\end{equation}
where $E(x_1,x_2)$ is the expectation value for setting $x_1$ at Alice's site and $x_2$ at Bob's site and $g$ is a real $M_1\times
M_2$-matrix of coefficients. Measurement outcomes are required to be in the interval $[-1,1]$. The dimension of the two subsystems is not
fixed. The local hidden variable bound $B$ holds
for all values
achievable in local hidden variable theories, i.e.
\begin{equation}
 \max_{a_1,a_2} \sum_{x_1,x_2=1}^{M_1,M_2} g_{x_1,x_2} a_1(x_1) a_2(x_2)\leq B. \label{eq:bellineq}
\end{equation}
$B$ can be calculated by performing this maximisation over all possible (deterministically) predefined measurement outcomes $a_1(x_1)=\pm 1$
and $a_2(x_2)=\pm 1$. Due to the assumption of locality, $a_1$ ($a_2$) does not depend on $x_2$ ($x_1$). The use of unmeasured outcomes is
motivated by the assumption of realism. See \cite{Peres86} for a more thorough analysis. For some $g$, Inequality~(\ref{eq:bellineq}) can be
violated
within quantum theory.\\ 
Similarly, one can write down bounds for expectation values predicted by quantum theory~\cite{Tsirelson80}. The analogue of
Ineq~(\ref{eq:BIwithoutmax}) reads
\begin{equation}
 \sum_{x_1,x_2=1}^{M_1,M_2} g_{x_1,x_2} E(x_1,x_2) \leq T, \label{eq:tsirelsonineq}
\end{equation}
where $T$ is a Tsirelson bound, which holds for all quantum states given by a density matrix $\rho$ and all observables
$\mathcal{A}_1(x_1)$ and $\mathcal{A}_2(x_2)$, i.e.
\begin{equation}
 \max_{\mathcal{A}_1,\mathcal{A}_2,\rho} \sum_{x_1,x_2=1}^{M_1,M_2} g_{x_1,x_2} \tr\left(\rho \mathcal{A}_1(x_1)\otimes
\mathcal{A}_2(x_2)\right)\leq T.  \label{eq:tsirelsonbound}
\end{equation}
Remember that we did not restrict the dimension of the Hilbert space. In \cite{Epping2013} we showed that a quantum bound for
Inequality~(\ref{eq:tsirelsonineq}) is given by
\begin{equation}
 T(g)=||g||_2 \sqrt{M_1 M_2}, \label{eq:singularvaluebound}
\end{equation}
where $||g||_2$ is the largest singular value of $g$. However, this bound $T$ is not always tight: It is not always possible to
achieve equality in Inequality~(\ref{eq:tsirelsonbound}). Nevertheless it is tight for a subclass of Bell inequalities,
which contains many well-known Bell inequalities. In this paper we will restrict ourselves to this class of Bell inequalities, for which
$T$ in Eq.~(\ref{eq:singularvaluebound}) is achievable for some states and observables. In this case the violation of the Bell inequality,
which is the ratio of the quantum and the classical value, is
\begin{equation}
 \nu = \frac{T}{B} \label{eq:nu}.
\end{equation}
According to a theorem by Tsirelson~\cite{Tsirelson1993}, there exist real vectors $\vec{v}_1$, $\vec{v}_2$, ..., $\vec{v}_{M_1}$
and $\vec{w}_1$,
$\vec{w}_2$, ..., $\vec{w}_{M_2}$, such that the quantum mechanical expectation value can be written as
\begin{equation}
 E(x_1,x_2)=\vec{v}_{x_1}^T \vec{w}_{x_2}. \label{eq:Eundvw}
\end{equation}
In the present context, it is usually more convenient to use these vectors instead of the observables. Let $g$ be a real $M_1\times
M_2$-matrix and $V$, $S$, $W$ be a singular value decomposition of $g$, i.e. $g=V S W^T$ with diagonal $S$ and
$V$, $W$ being orthogonal. We denote the dimension of the space of the largest singular value $||g||_2$ as $d$, i.e. this is the
degeneracy of the largest singular value. The corresponding matrices of the
truncated singular value decomposition associated with $||g||_2$ contain the first $d$ columns (the singular vectors) of $V$ and $W$,
respectively.
\begin{theorem}[Tightness of $T$~\cite{Epping2013}]\label{thm:tightness}
For any real $M_1\times M_2$ matrix $g$, let $V^d$, $||g||_2 \1^d$, $W^d$ be a truncated singular value decomposition of $g$ associated with
$||g||_2$, where $d$ is the degeneracy of $||g||_2$. The bound $T=||g||_2 \sqrt{M_1 M_2}$ can be reached with
observables, which are
linked via $E(i,j)=\vec{v}_{i}^T\vec{w}_{j}$ to $d'\leq
d$-dimensional real
vectors $\vec{v}_i$ and $\vec{w}_j$ given by
\begin{eqnarray}
 \hphantom{\mbox{and }{}}\vec{v}_i &=& \alpha^T V^d_{i,*} \label{eq:obsv}\\
 \mbox{and }\vec{w}_j &=& \sqrt{\frac{M_2}{M_1}}\alpha^T W^d_{j,*},\label{eq:obsw}
\end{eqnarray}
if and only if there exists a $d\times d'$-matrix $\alpha$, such that these vectors are normalised. Here $V^d_{i,*}$ and $W^d_{j,*}$ denote
column vectors
containing the elements of the $i$-th row of $V^d$ and the $j$-th row of $W^d$, respectively. 
\end{theorem}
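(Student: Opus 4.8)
The plan is to pass, via Tsirelson's representation~(\ref{eq:Eundvw}), from observables to real vectors, and then to establish the two implications separately: the ``if'' part as a direct computation, and the ``only if'' part by analysing the equality conditions of the elementary inequality chain that yields the bound $T=\|g\|_2\sqrt{M_1M_2}$.

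\emph{The ``if'' direction.} Suppose a $d\times d'$ matrix $\alpha$ is given for which the vectors $\vec v_i$, $\vec w_j$ defined by~(\ref{eq:obsv}) and~(\ref{eq:obsw}) are all unit vectors in $\R^{d'}$. By the converse part of Tsirelson's theorem they are realised by a maximally entangled state and $\pm1$-valued observables with $E(i,j)=\vec v_i^T\vec w_j$, so it only remains to evaluate the Bell expression. I would first record the consequences of the truncated singular value decomposition $g=VSW^T$ that will be used: since $(V^d)^TV$ and $(W^d)^TW$ each equal a $d\times d$ identity block padded with zero columns, and the first $d$ singular values all equal $\|g\|_2$, one obtains $(V^d)^T g\, W^d=\|g\|_2\,\1^d$ (equivalently $g\, W^d=\|g\|_2 V^d$), while $(V^d)^TV^d=(W^d)^TW^d=\1^d$ gives $\sum_i V^d_{i,*}(V^d_{i,*})^T=\1^d$ and likewise for $W^d$. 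Substituting~(\ref{eq:obsv}),~(\ref{eq:obsw}) and rewriting the double sum as a trace,
\[
 \sum_{i,j}g_{ij}\,\vec v_i^T\vec w_j=\sqrt{M_2/M_1}\;\tr\big(\alpha\alpha^T\,(V^d)^T g\, W^d\big)=\sqrt{M_2/M_1}\;\|g\|_2\,\tr(\alpha\alpha^T),
\]
and summing the normalisation $\|\vec v_i\|^2=(V^d_{i,*})^T\alpha\alpha^T V^d_{i,*}=1$ over $i$ gives $\tr(\alpha\alpha^T)=M_1$ (the $\vec w_j$ yield the same identity, which is why the two normalisations are compatible). Hence the Bell expression equals $\sqrt{M_2/M_1}\,\|g\|_2\,M_1=\|g\|_2\sqrt{M_1M_2}=T$.

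\emph{The ``only if'' direction.} Conversely suppose $T$ is reached. By~(\ref{eq:Eundvw}) the correlations are $E(i,j)=\vec v_i^T\vec w_j$ with $\|\vec v_i\|\le1$, $\|\vec w_j\|\le1$; set $u_i:=\sum_j g_{ij}\vec w_j$ and let $\mathcal W$ be the matrix with rows $\vec w_j^T$, so that $u_i^T$ is the $i$-th row of $g\mathcal W$ and $\|\mathcal W\|_F^2=\sum_j\|\vec w_j\|^2\le M_2$. Then
\[
 \sum_{i,j}g_{ij}\vec v_i^T\vec w_j=\sum_i\vec v_i^T u_i\le\sum_i\|u_i\|\le\sqrt{M_1}\,\|g\mathcal W\|_F\le\sqrt{M_1}\,\|g\|_2\,\|\mathcal W\|_F\le\|g\|_2\sqrt{M_1 M_2}=T,
\]
using Cauchy--Schwarz twice and the column-wise bound $\|g\vec x\|\le\|g\|_2\|\vec x\|$. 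Equality throughout forces $\|\vec w_j\|=1$ for all $j$; every column of $\mathcal W$ to lie in the span of the columns of $W^d$, i.e.\ $\mathcal W=W^d\gamma$ for some $d\times n$ matrix $\gamma$; all rows of $g\mathcal W$ to have equal norm; and each $\vec v_i$ to be the unit vector along $u_i$ --- the possibility $u_i=0$ being excluded, since a single zero row would already drop the bound to $\sqrt{M_1-1}\,\|g\|_2\sqrt{M_2}<T$. Using $g\, W^d=\|g\|_2 V^d$ one gets $u_i=\|g\|_2\,\gamma^T V^d_{i,*}$; the equal-norm condition and the value being $T$ then pin $\|\gamma^T V^d_{i,*}\|=\sqrt{M_2/M_1}$, so $\vec v_i=\sqrt{M_1/M_2}\,\gamma^T V^d_{i,*}$ and $\vec w_j=\gamma^T W^d_{j,*}$. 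Setting $\alpha:=\sqrt{M_1/M_2}\,\gamma$ reproduces~(\ref{eq:obsv}) and~(\ref{eq:obsw}) verbatim, with the vectors normalised by construction; finally, restricting to the span of the $\vec v_i$ and $\vec w_j$, of dimension $d'=\mathrm{rank}\,\gamma\le d$, puts everything into the claimed $d'$-dimensional form without altering any $E(i,j)$.

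\emph{Expected obstacle.} The ``if'' direction is bookkeeping with the singular value decomposition; the substance is in the ``only if'' direction, namely extracting the equality conditions of the displayed chain (that equality genuinely forces $\mathcal W=W^d\gamma$ together with constant row norms of $g\mathcal W$), ruling out the degenerate case $u_i=0$, and checking that the dimension reduction is consistent with $d'\le d$ and reproduces the exact factor $\sqrt{M_2/M_1}$ in~(\ref{eq:obsv})--(\ref{eq:obsw}). One also has to justify that nothing is lost by taking the Tsirelson vectors to be unit vectors, which relies on the converse of Tsirelson's theorem and the linearity of $\sum_{i,j}g_{ij}\vec v_i^T\vec w_j$ in each $\vec v_i$ and $\vec w_j$.
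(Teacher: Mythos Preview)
The paper does not actually prove Theorem~\ref{thm:tightness}: it is stated without proof and attributed to the authors' earlier work~\cite{Epping2013}, so there is no in-paper argument to compare your proposal against.

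That said, your argument is internally sound and is in the spirit of how the bound~(\ref{eq:singularvaluebound}) is established. The ``if'' direction is clean: the identity $(V^d)^T g\,W^d=\|g\|_2\1^d$ together with $\sum_i V^d_{i,*}(V^d_{i,*})^T=\1^d$ yields the value $T$ directly, and your observation that summing $\|\vec v_i\|^2=1$ gives $\tr(\alpha\alpha^T)=M_1$ (and that the $\vec w_j$ give the same trace, so the normalisation conditions are consistent) is the key bookkeeping step. The ``only if'' direction via the four-step inequality chain and its equality conditions is also correct; the point that equality in $\|g\mathcal W\|_F\le\|g\|_2\|\mathcal W\|_F$ forces each column of $\mathcal W$ into the top singular subspace of $g^T$, hence $\mathcal W=W^d\gamma$, is exactly what pins down the structure. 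Your exclusion of $u_i=0$ and the final rescaling $\alpha=\sqrt{M_1/M_2}\,\gamma$ are fine. One minor remark: since the optimal observables can without loss of generality be taken with spectrum $\{\pm1\}$, Tsirelson's theorem already gives unit vectors, so you could start from $\|\vec v_i\|=\|\vec w_j\|=1$ rather than $\le1$; this slightly shortens the equality analysis but changes nothing of substance.
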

There is a geometric interpretation of the norm conditions: The bound $T$ is tight for observables corresponding to $d'$-dimensional real
vectors $\vec{v}_i$ and $\vec{w}_j$ if and only if the vectors $V_{i,*}^d$ and $\sqrt{\frac{M_2}{M_1}} W_{j,*}^d$ lie on the surface of an
origin-centred ellipsoid with no more than $d'$ finite semi-axes~\cite{Epping2013}. We call this object a $d'$-dimensional ellipsoid.

\section{Modifying Bell inequalities inside this class}\label{sec:methods}
We aim at modifying Bell inequalities inside the class described in the previous section, i.e. those where the quantum bound given in Eq.~(\ref{eq:singularvaluebound}) is tight. In particular, we are interested in operations
that do not change the value of $T$ given in Eq.~(\ref{eq:singularvaluebound}). However, in general these operations do change the classical
bound of the Bell inequality, i.e. the modification's effect on the quantum and the classical value are qualitatively and quantitatively
different. This is in contrast to arbitrary modifications of the coefficients, where both values are simultaneously affected.
We will exemplify later that such modifications can be a useful tool, e.g. for optimising Bell inequalities. The following corollary gives
modifications with the properties we are seeking.

\begin{corollary}\label{cor:mods}
Let $g$ be a $M_1\times M_2$ real matrix with singular value decomposition $V$,$S$,$W$, i.e. $g=V S W^T$, such that $T(g)$ is achievable.
The multiplicity of $||g||_2$ is denoted by $d$, the length of the diagonal of $S$ is $s=\min(M_1,M_2)$. 
The following modifications of $g$ lead to achievable bounds $T(g')$ (primed symbols correspond to the modified coefficients $g'$).
\begin{enumerate}
\item\label{mods:twist} "Twisting" of singular vectors:\\
 For
\begin{equation}
  g' =V \left(\begin{array}{cc}
                                           R_1 & 0 \\ 0 & R_2
                                          \end{array}\right) S \left(\begin{array}{cc}
                                           \1^d & 0 \\ 0 & R_3
                                          \end{array}\right) W^T,
\end{equation}
where $R_1$ is a $d\times d$ orthogonal matrix commuting with $\alpha$ (see Eq.~(\ref{eq:obsv})) and $R_2$ and $R_3$ are orthogonal matrices
of dimension $(M_1-d)$ and
$(M_2-d)$, respectively, $T(g')=T(g)$ is achievable.
\item\label{mods:shift} Modification of singular values:\\
 For signs $\sigma_1,...,\sigma_d=\pm 1$ and real numbers $\lambda_1,\lambda_{d+1},\lambda_{d+2},...,\lambda_s$ fulfilling
\begin{equation}
|\lambda_i+S_{i,i}| < \left|\,||g||_2+\lambda_1\right| \mbox{ for all }i>d \label{eq:diagonalcond}
\end{equation} the modified coefficients $g'=V S' W^T$ with 
\begin{equation}
S' =
\mbox{diag}(\sigma_1(S_{1,1}+\lambda_1),...,\sigma_d(S_{d,d}+\lambda_1),S_{d+1,d+1}+\lambda_{d+1},...,S_{s,s}+\lambda_s)
\end{equation}
correspond to an inequality with achievable $T(g')=\frac{||g||_2+\lambda_1}{||g||_2} T(g)$.
\end{enumerate}
\end{corollary}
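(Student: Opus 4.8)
The plan is to reduce both statements to Theorem~\ref{thm:tightness}. For each modification I would first write down an explicit singular value decomposition of $g'$, read off $||g'||_2$ together with the degeneracy of its largest singular value, and then exhibit a matrix $\alpha'$ for $g'$ normalising the associated vectors $\vec v'_i={\alpha'}^T {V'}^d_{i,*}$ and $\vec w'_j=\sqrt{M_2/M_1}\,{\alpha'}^T {W'}^d_{j,*}$. Since the normalisation condition in Theorem~\ref{thm:tightness} refers only to the truncated singular-vector matrices and not to the singular values, the argument always comes down to understanding how the modification acts on $V^d$ and $W^d$.

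For the twisting of item~\ref{mods:twist} the key observation is that the top-left $d\times d$ block of $S$ is the scalar matrix $||g||_2\1^d$ and therefore commutes with $R_1$. Pushing $R_1$ through $S$ rewrites $g'$ as $V'S{W'}^T$ with the block-orthogonal factors $V'=V\,\mbox{diag}(\1^d,R_2)$ and $W'=W\,\mbox{diag}(R_1^T,R_3^T)$ (here $\mbox{diag}$ denotes the block-diagonal matrix). This is a genuine SVD with the same singular values as $g$, so $||g'||_2=||g||_2$ with unchanged degeneracy $d$ and $T(g')=T(g)$; and since $R_2,R_3$ act only on the non-maximal subspaces, the first $d$ columns of $V'$ equal $V^d$ and those of $W'$ equal $W^dR_1^T$. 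Taking $\alpha'=\alpha$ leaves Alice's vectors untouched, $\vec v'_i=\vec v_i$, while $\vec w'_j=\sqrt{M_2/M_1}\,\alpha^T R_1 W^d_{j,*}=R_1\vec w_j$, because the commutation of $R_1$ with $\alpha$ makes $R_1$ act on Bob's vectors as a norm-preserving rotation. Hence all vectors stay normalised, Theorem~\ref{thm:tightness} applies, and $T(g')$ is achievable; geometrically the common ellipsoid of $g$ is merely rotated by $R_1$ in Bob's laboratory, which is the relative rotation announced in the abstract.

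For item~\ref{mods:shift} I would first invoke Inequality~(\ref{eq:diagonalcond}): the first $d$ diagonal entries of $S'$ all have magnitude $\left|\,||g||_2+\lambda_1\right|$, while each remaining entry $S_{i,i}+\lambda_i$ has strictly smaller magnitude, so $||g'||_2=\left|\,||g||_2+\lambda_1\right|$ with the same multiplicity $d$. Absorbing the signs of the diagonal of $S'$ into $V$ (or into $W$) then exhibits an SVD of $g'$ whose truncated matrices are $V^d$ and $W^d$ up to the sign matrix $\Sigma_d=\mbox{diag}(\sigma_1,\dots,\sigma_d)$, so that $T(g')=(||g||_2+\lambda_1)\sqrt{M_1M_2}=\frac{||g||_2+\lambda_1}{||g||_2}T(g)$. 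If all $\sigma_i=+1$ the proof is already complete: the truncated matrices coincide with those of $g$, the normalisation condition of Theorem~\ref{thm:tightness} is literally the one satisfied by $\alpha$, and only the (irrelevant) singular values have changed. For general signs the remaining point is, just as in the twisting, that the relative orthogonal transformation $\Sigma_d$ of the maximal subspace can be absorbed while keeping every $\vec v'_i$ and $\vec w'_j$ normalised; I would do this by replacing $\alpha$ with a sign-adjusted $\alpha'$ --- equivalently the ellipsoid matrix $\alpha\alpha^T$ with $\Sigma_d(\alpha\alpha^T)\Sigma_d$ --- and checking the two families of normalisation conditions with the help of $\Sigma_d^2=\1^d$.

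The step I expect to be the real obstacle is precisely this bookkeeping with $\Sigma_d$ in item~\ref{mods:shift}: one must verify that a single sign-adjusted $\alpha'$ normalises \emph{both} the new Alice vectors and the new Bob vectors at once, even though $\Sigma_d$ is applied on only one side of the decomposition. A lesser point is to make the commutation hypothesis on $R_1$ in item~\ref{mods:twist} precise when $d'<d$, where ``$R_1$ commutes with $\alpha$'' should be read as $R_1\alpha=\alpha R_1'$ for a suitable orthogonal $R_1'$ acting on the $d'$-dimensional vectors, so that $R_1$ still acts by conjugation on $\alpha\alpha^T$. Once these are settled, the rearrangement of the SVD, the multiplicity count from Inequality~(\ref{eq:diagonalcond}), and the value of $T(g')$ are all routine.
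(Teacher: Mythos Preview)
Your approach coincides with the paper's: both items are reduced to Theorem~\ref{thm:tightness} by writing down an SVD of $g'$ with unchanged maximal-singular-value subspace and verifying that the normalisation conditions on $\alpha$ survive. The only cosmetic difference is that the paper absorbs $R_1$ into the $V$-factor rather than the $W$-factor (so commutation gives $\|\alpha^T R_1^T V^d_{i,*}\|=\|R_1^T\alpha^T V^d_{i,*}\|=\|\alpha^T V^d_{i,*}\|$ directly), and for the signs $\sigma_i$ in item~(\ref{mods:shift}) the paper, like you, simply records that a negative diagonal entry flips the sign of the corresponding singular vector without carrying the $\Sigma_d$-bookkeeping any further.
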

\begin{proof}
\begin{enumerate}
 \item $R_1$ can be considered as a rotation of the singular vectors, i.e. the singular values in $S$ are not affected. If $R_1$ and the
$d\times d$-matrix $\alpha$ commute, then
\begin{equation}
 ||\alpha^T (R_1^T V_{i,*}^d)||=||R_1^T \alpha^T  V_{i,*}^d||=||\alpha^T  V_{i,*}^d||,
\end{equation}
i.e. the conditions of Theorem~\ref{thm:tightness} are not affected. $R_2$ and $R_3$ merely rotate the singular vectors outside the space
associated with $||g||_2$, which neither affects the tightness nor the value of the bound.
\item The conditions for tightness according to Theorem~\ref{thm:tightness} and the value of $T$ are not affected by modification of
non-maximal singular values, as long as they do not become maximal. Adding the same value to all largest singular values is only a scaling
of $T$ (as long as they remain maximal). A negative diagonal entry induces a sign change of the elements of the corresponding singular
vector.
\end{enumerate}
\end{proof}
\noindent Please note that the condition of (\ref{mods:twist}) is fulfilled, if there exists a solution $\alpha\propto \1$.\\
We remark that (\ref{mods:shift}) is a generalisation of the diagonal modification in \cite{Fishburn1994}. There, $V=W$ and
$g'=g+\lambda \1$, which corresponds to $\lambda_i=\lambda$. The condition of Eq.~(\ref{eq:diagonalcond}) on the $\lambda_i$ can be ignored,
if one assures tightness according to Theorem~\ref{thm:tightness}. For example, depending on the particular form of $V$ and $W$, the bound
might be tight for different values of $d$. In particular, $T$ is tight, if the new singular values are all equal. Furthermore
(\ref{mods:shift}) includes the special case, where $g'=r g$ for any $r\in\R$.
\section{Using these modifications as a method}\label{sec:applications}
In this section, the modifications described in Corollary~\ref{cor:mods} are applied to specific examples of
coefficient matrices $g$.
\subsection{Maximally violated Bell inequality for relative rotation of laboratories}
We start with the modification (\ref{mods:twist}), i.e. the twisting of the singular vectors. First we note that
$R_1$ is a relative rotation of the two
parties in the sense, that the real vectors $\vec{v}_{x_1}$, which define optimal observables for party one, are rotated by $R_1$. For
$d'=3$, one can interpret $\vec{v}_{x_1}$ as a Bloch-vector, i.e. $\mathcal{A}(x_1)=\vec{v}_{x_1}^T \vec{\sigma}$, where $\vec{\sigma}$ denotes the vector
containing the three Pauli matrices. In this way we see that the rotation $R_1$ corresponds to a relative rotation of the two laboratories
in the
usual sense.\\
This motivates us to prove the following statement.
\begin{example}
For every relative rotation between the laboratories of party one and party two, there exists a
Bell inequality that is maximally violated for exactly this rotation. 
We require that the experimental setup is fixed up to the relative rotation, i.e. the measurement directions in the local coordinate
systems and the shared state (e.g. $\ket{\phi^+}=\frac{1}{\sqrt{2}}(\ket{00}+\ket{11})$) do not depend on the rotation angle. Consider the
Bell inequalities given via the coefficient matrix
\begin{equation}
g=\left(
\begin{array}{ccc}
 \frac{1}{2} & \frac{1}{2} & 0 \\
 -\frac{1}{2} & \frac{1}{2} & 0 \\
 0 & 0 & \frac{1}{\sqrt{2}} \\
 \frac{1}{\sqrt{2}} & 0 & 0 \\
 0 & \frac{1}{\sqrt{2}} & 0 \\
 0 & 0 & \frac{1}{\sqrt{2}} \\
\end{array}
\right) R(\Phi,\Theta,\Psi), \label{eq:gedrehteBI}
\end{equation}
where
\begin{eqnarray}
R(\Phi,\Theta,\Psi)&=&\phantom{{}\times{}}
\left(
\begin{array}{ccc}
 1 & 0 & 0 \\
 0 & \cos (\Phi ) & \sin (\Phi ) \\
 0 & -\sin (\Phi ) & \cos (\Phi ) \\
\end{array}
\right)\times \nonumber\\
&&\times
\left(
\begin{array}{ccc}
 \cos (\Theta ) & 0 & -\sin (\Theta ) \\
 0 & 1 & 0 \\
 \sin (\Theta ) & 0 & \cos (\Theta ) \\
\end{array}
\right)
\left(
\begin{array}{ccc}
 \cos (\Psi ) & \sin (\Psi ) & 0 \\
 -\sin (\Psi ) & \cos (\Psi ) & 0 \\
 0 & 0 & 1 \\
\end{array}
\right)
\end{eqnarray}
is a general rotation given by the roll-pitch-yaw angle. From Eq~(\ref{eq:gedrehteBI}) one can read the truncated singular value
decomposition associated with the threefold ($d=3$) degenerate maximal singular value $1$: i.e the first factor is $V^d$, $S^d=\1$ and
$W^T=R(\Phi,\Theta,\Psi)$. From this we already know that
$T(g)=\sqrt{M_1 M_2}=3\sqrt{2}$ for all angles. This bound is achievable, because $\alpha=\sqrt{2}\1^3$ is a
solution (see Theorem~\ref{thm:tightness}). The vectors $V_{1,*}^d$, $V_{2,*}^d$, $V_{4,*}^d$, $V_{5,*}^d$ force the rank of $\alpha$ to be
at least two (two or more semi-axes of the corresponding ellipsoid are finite). Therefore, the inequality associated with $g$ is really a
Bell inequality,
i.e. it can be violated. The bound is achieved for the state $\ket{\phi^+}=\frac{1}{\sqrt{2}}(\ket{00}+\ket{11})$ with
observables
\begin{eqnarray}
\mathcal{A}(x_1)&=&\vec{v}_{x_1}^T \vec{\sigma},\\
\mathcal{A}(x_2)&=&(\vec{w}_{x_2}^T \vec{\sigma})^T.
\end{eqnarray}
Because
\begin{equation}
\vec{w}_j = \sqrt{\frac{M_2}{M_1}} \alpha^T W_{j,*}^d = R(\Phi,\Theta,\Psi)_{j,*},
\end{equation}
i.e. the measurement directions of party two are given by the columns of $R^T(\Phi,\Theta,\Psi)$, this Bell inequality is maximally violated
for a relative rotation of the laboratories given by the roll-pitch-yaw angle (see Figure~\ref{fig:rotatedbellineq} (a)). The violation
$\frac{T}{B}$ of the inequality given by the coefficients in Eq.~(\ref{eq:gedrehteBI}) depends on the angles (see
Figure~\ref{fig:rotatedbellineq} (b,c,d)).
\end{example}
\begin{figure}
 \subfigure[ ]{ \includegraphics[height=0.245 \linewidth]{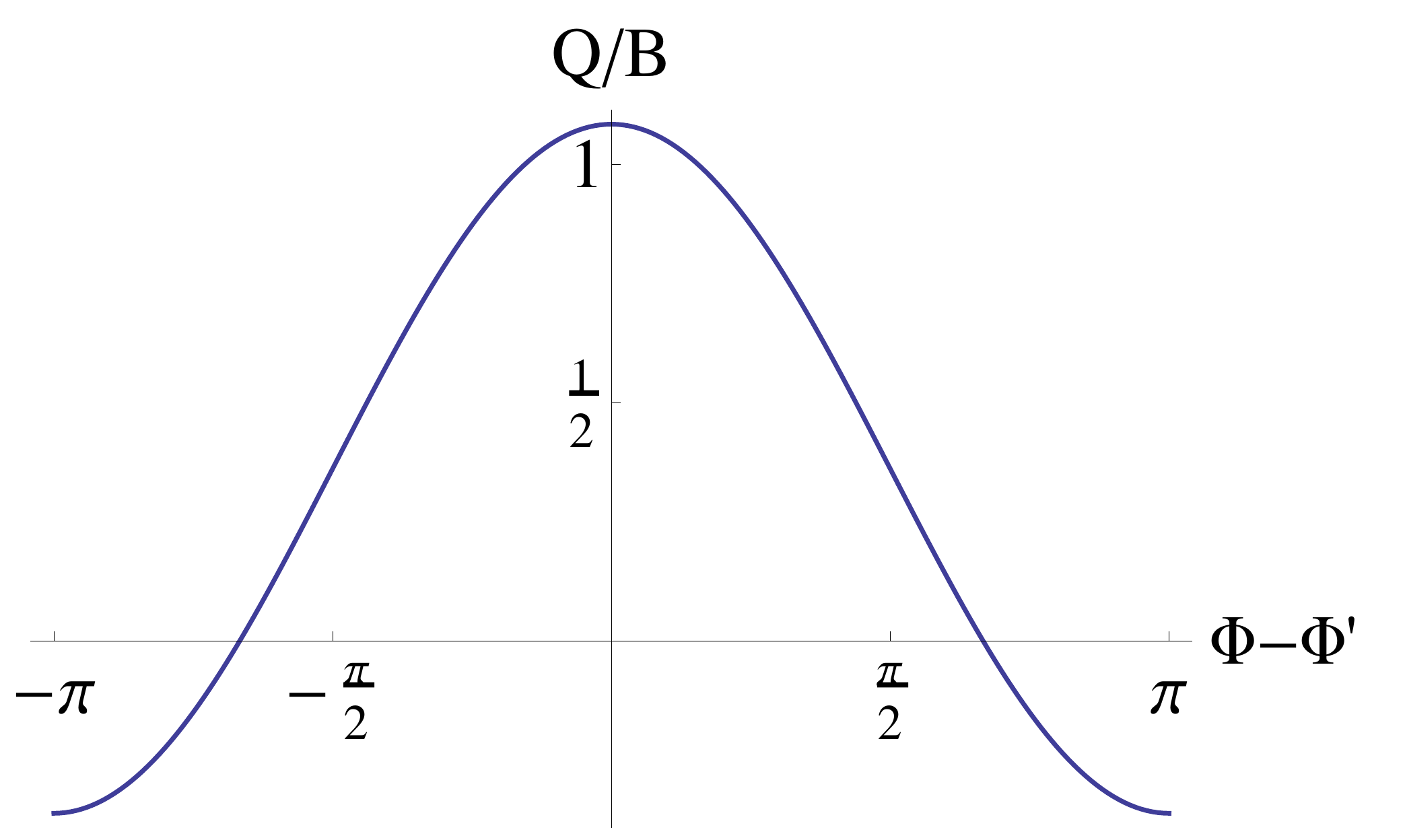} }\\
 \subfigure[$\Psi=\frac{1}{4}\pi$]{ \includegraphics[height=0.245 \linewidth]{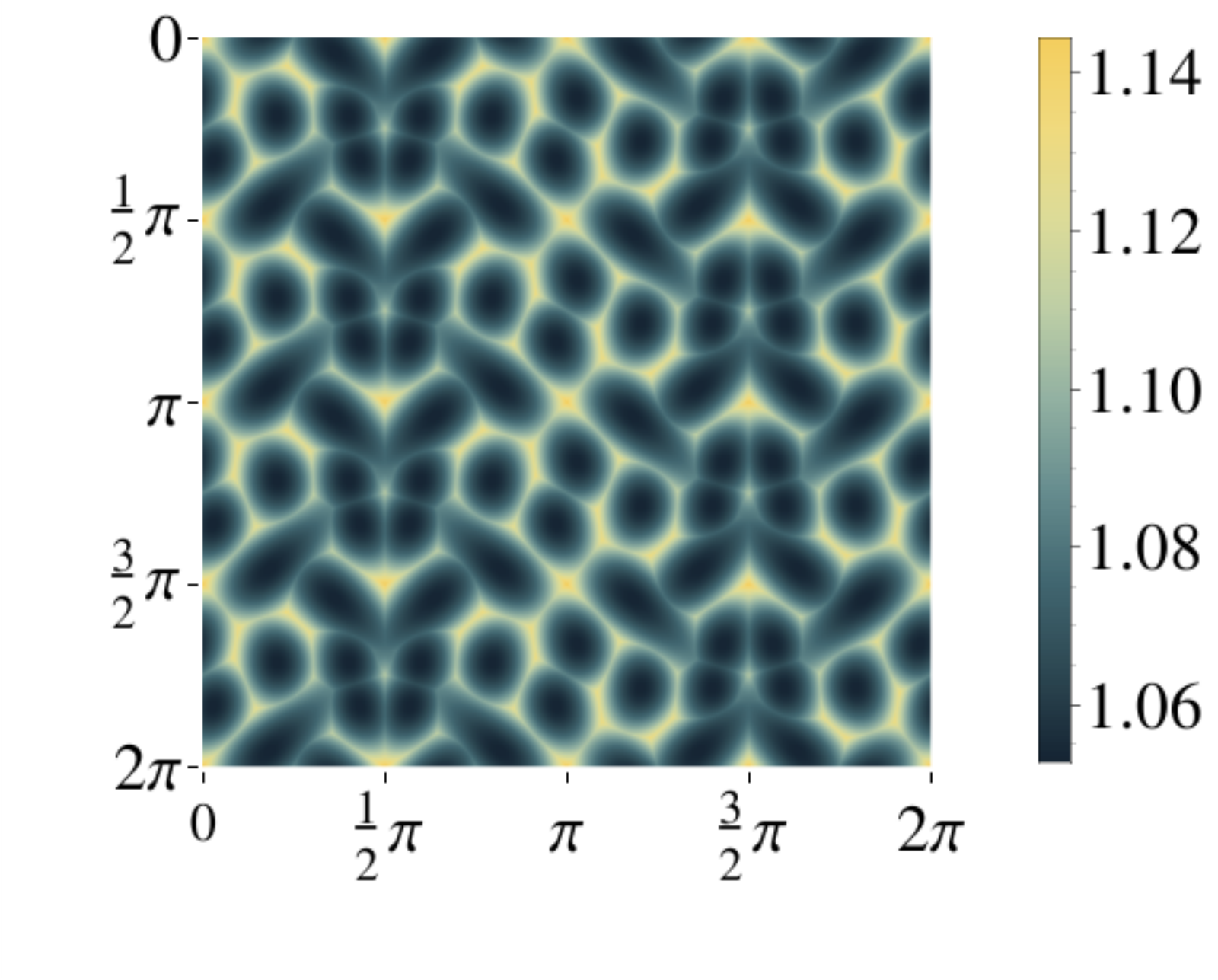} } \hfill
 \subfigure[$\Psi=\frac{1}{2}\pi$]{ \includegraphics[height=0.245 \linewidth]{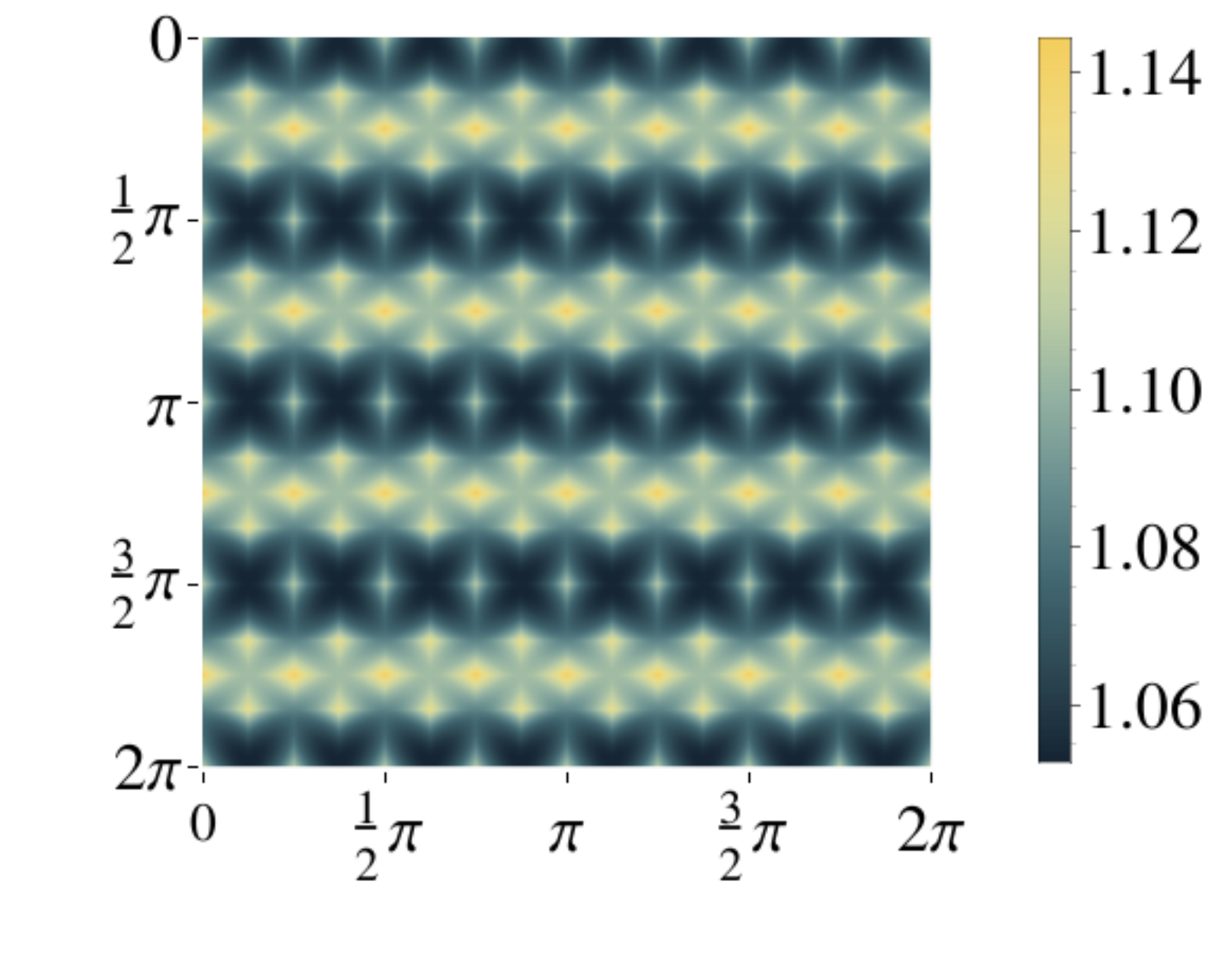} } \hfill
 \subfigure[$\Psi=\pi$]{ \includegraphics[height=0.245 \linewidth]{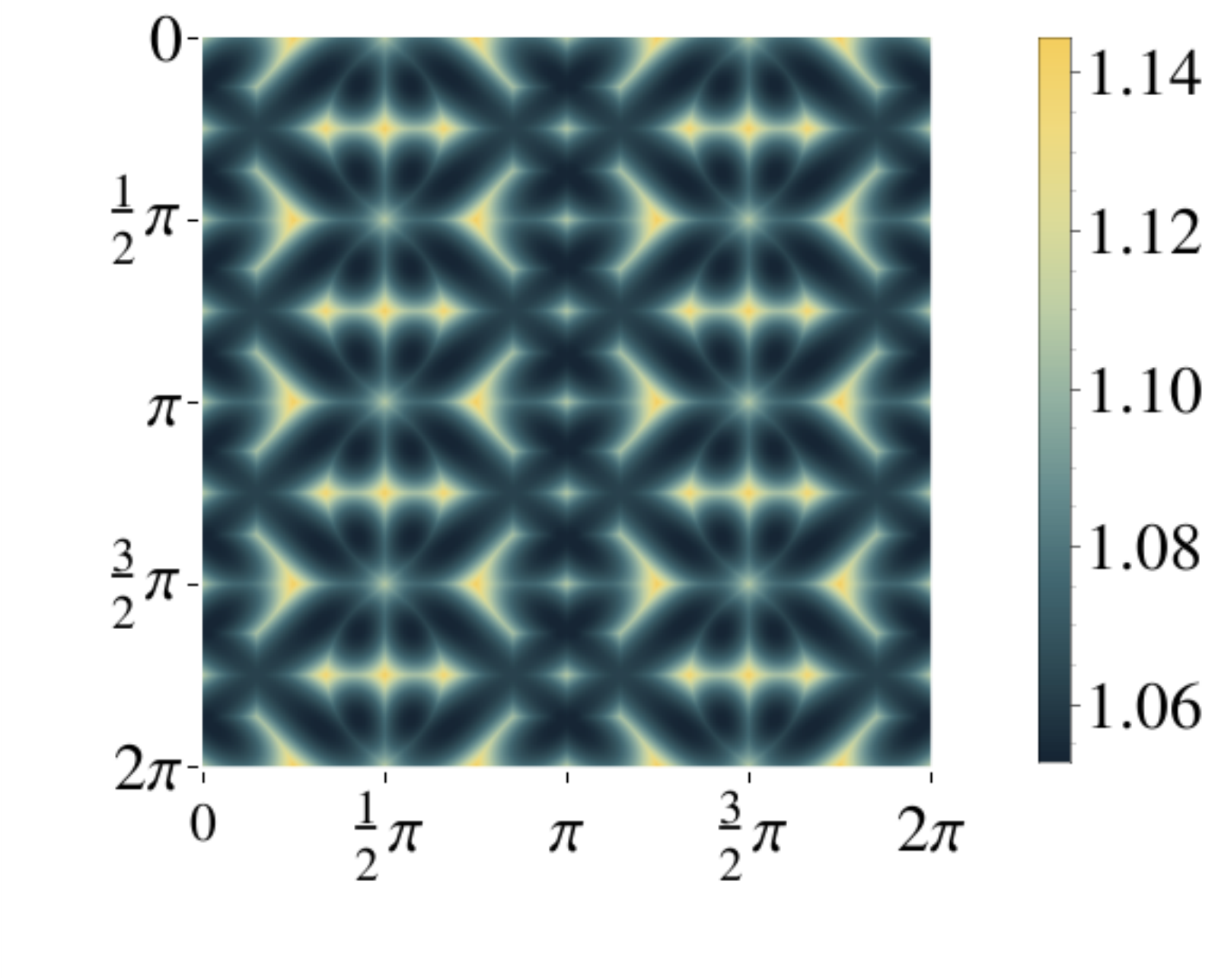} }
 \caption{Violation of the ``rotated'' Bell inequality (Eq.~(\ref{eq:gedrehteBI})). (a) The measured violation (quantum value $Q$ for actual
observables devided by local hidden variable value $B$) of the Bell inequality with
coefficients given in Eq~(\ref{eq:gedrehteBI}) for optimal angles $(\Phi,\Theta,\Psi)$ depending on the yaw angle $\Phi'$ of the actual
observables. $\Phi$, $\Theta$ and $\Psi$ are fixed to arbitrary
values. The same plot can be drawn for $\Theta$ and $\Psi$. (b,c,d) The maximal violation of Bell inequalities given by different angles,
see Eq.~(\ref{eq:gedrehteBI}), where $\Psi$ is fixed.}
 \label{fig:rotatedbellineq}
\end{figure}

\subsection{Optimisation of Bell inequalities for fixed measurement directions}
In several applications a large violation is desirable. Given the experimental measurement setup used to evaluate a given Bell inequality, there might
be different inequalities that lead to a higher violation. In that sense, they are "better" inequalities. Finding an optimal inequality
seems to be a difficult task. In some cases the methods above (Corollary~\ref{cor:mods} (\ref{mods:twist}) and (\ref{mods:shift})) might
give an intuition how to improve a given matrix of coefficients $g$ without changing the involved measurements.\\
There is another possible motivation for restricting the observables in the optimisation of the violation: It turns out that the average
violation of Bell inequalities inside this restricted parameter space is larger than the one for the whole parameter space.
Figure~\ref{fig:histogram} shows the probability of an amount of violation for completely random coefficients and rotated versions
(Corollary~\ref{cor:mods} (\ref{mods:twist})) of
\begin{equation}
 g=\left(
\begin{array}{ccc}
 1 & -1 & -1 \\
 1 & 1 & -1 \\
 1 & 1 & 1 \\
\end{array}
\right).
\end{equation}

\begin{figure}[tbp]
 \begin{center}
\includegraphics[height=0.245\textwidth]{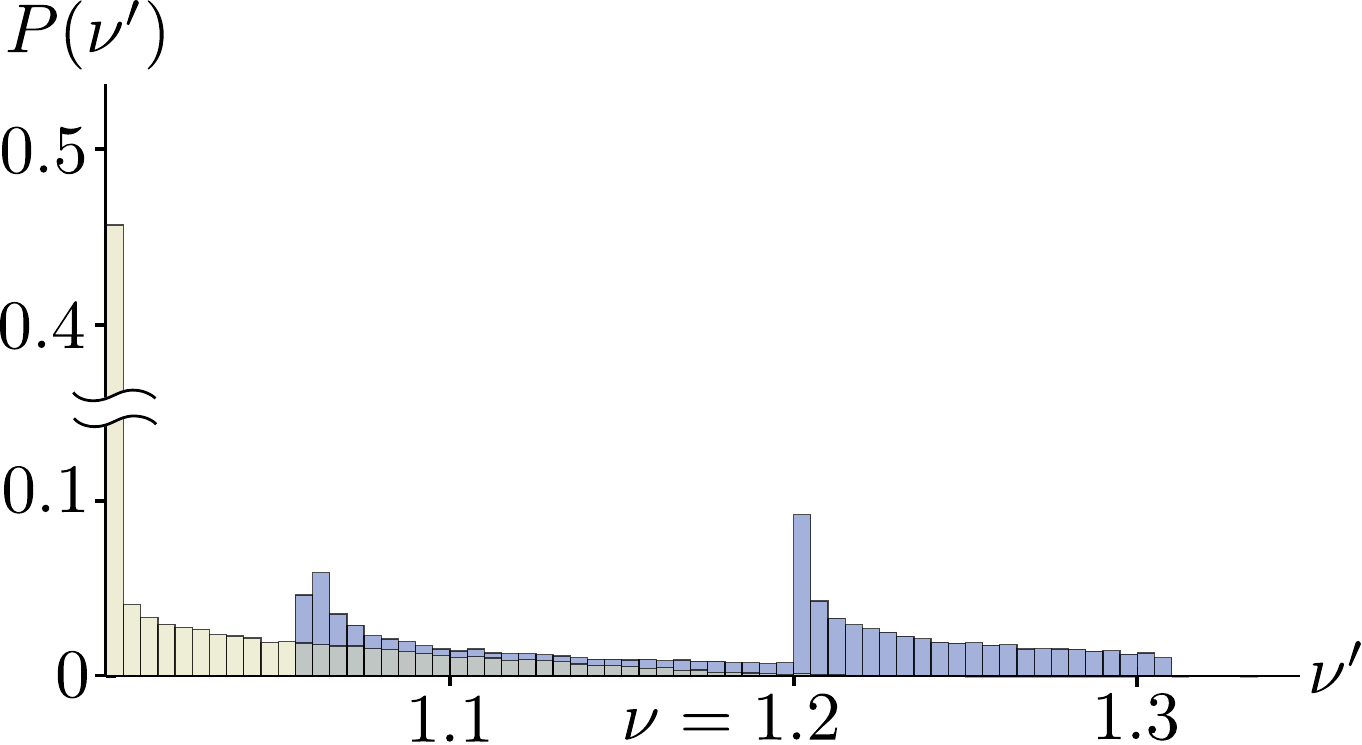}
  
 \end{center}

 \caption{A histogram of the maximal violation $\nu'$ of a random Bell inequality (light) and a ``twisted'' version of Gisin's
inequality~\cite{Gisin1999}
(dark). The size of the matrices is $3\times 3$. The random inequality has equally distributed coefficients in $[-1,1]$. For the other
inequality, the rotation angles are equally distributed. The probability of a given violation is estimated from samples of $50000$
inequalities each. The violation of the original inequality by Gisin is  $\nu=1.2$.}
 \label{fig:histogram}
\end{figure}

\begin{example}
The coefficients of Gisin's inequality~\cite{Gisin1999} for $M_1=M_2=6$ read 
\begin{equation}
g=\left(
\begin{array}{cccccc}
 1 & -1 & -1 & -1 & -1 & -1 \\
 1 & 1 & -1 & -1 & -1 & -1 \\
 1 & 1 & 1 & -1 & -1 & -1 \\
 1 & 1 & 1 & 1 & -1 & -1 \\
 1 & 1 & 1 & 1 & 1 & -1 \\
 1 & 1 & 1 & 1 & 1 & 1 \\
\end{array}
\right).
\end{equation}
This inequality has $B=18$, as one can easily see when the first two rows get multiplied with $-1$. The quantum value is
$T=M/\sin(\pi/(2M))=12 \sqrt{2+\sqrt{3}}\approx 23.1822$. Using Corollary~\ref{cor:mods} (\ref{mods:shift}) we can optimise the coefficients
numerically, and obtain
\begin{eqnarray}
g'&=&V \mbox{diag }(||g||_2,||g||_2,||g||_2,||g||_2,-||g||_2,-||g||_2) W^T \nonumber\\
 &=& (1+\sqrt{3})\left(
\begin{array}{cccccc}
 0 & 0 & -1 & 0 & 0 & -1 \\
 1 & 0 & 0 & -1 & 0 & 0 \\
 0 & 1 & 0 & 0 & -1 & 0 \\
 0 & 0 & 1 & 0 & 0 & -1 \\
 1 & 0 & 0 & 1 & 0 & 0 \\
 0 & 1 & 0 & 0 & 1 & 0 \\
\end{array}
\right),
\end{eqnarray}
which is equivalent to the CHSH inequality. This implies a violation of $\nu'=\sqrt{2}$ and $B(g')=6(1+\sqrt{3})\approx 16.3923$. Here we
ignored
the condition in Eq~(\ref{eq:diagonalcond}) of Corollary~\ref{cor:mods}~(\ref{mods:shift}) as tightness of $T$ is ensured by the fact that
all singular values are equal. One would obtain the same result, when considering $g'=V \mbox{diag
}(||g||_2,||g||_2,||g||_2-\varepsilon,||g||_2-\varepsilon,-||g||_2+\varepsilon,-||g||_2+\varepsilon) W^T $ for a very small positive
$\varepsilon$. In this way the degeneracy remains $d'=2$ and the condition of Eq.~(\ref{eq:diagonalcond}) is fulfilled. The matrix $g'$
constitutes a local optimum, i.e. small modifications of the singular values lead to a smaller violation.
\end{example}
\begin{example}[Fishburn-Reeds Inequalities~\cite{Fishburn1994}]
In \cite{Fishburn1994}, the authors construct a series of inequalities with increasing number of measurement settings. For $d\in\N$ greater
or equal two,
\begin{equation}
 g=V^d (V^d)^T-\frac{4}{3} \1,
\end{equation}
where $V^d$ is a $(d-1)d\times d$ matrix containing all rows of the form $(-1,0,...,0,1,0,...,0)$ and $(1,0,...,0,1,0,...)$. The columns of
$V^d$ are orthogonal and thus $\frac{1}{\sqrt{2 (d-1)}} V^d$, $(2(d-1)-4/3)\1^d$ and $\frac{1}{\sqrt{2 (d-1)}} V^d$ form a truncated
singular value decomposition of $g$. Therefore, the optimal measurement settings for party one and party two are identical. Intuitively,
this choice of settings seems to be not optimal with respect to the amount of violation. We searched numerically for inequalities with a
larger violation using methods (\ref{mods:twist}) and (\ref{mods:shift}) of Corollary~\ref{cor:mods}. We give improved violations for
$d=2,...,5$ in Table~\ref{tab:frprime}. Due to the computational complexity of determining $B$, it is likely that the given values are not
the maximal ones achievable with these methods.
\end{example}
\begin{table}[bt]
\centering
\begin{tabular}{|c|c|c|c}
\hline
 $d$ & $\nu=T/B$ & $\nu'=T/B'$\\
\hline
$2$ & $1$ & $\sqrt{2}\approx 1.41421$\\
$3$ & $4/3\approx 1.33333$ & $1.34163$\\
$4$ & $7/5=1.4$ & $\sqrt{2} \approx 1.41421$\\
$5$ & $10/7\approx 1.42857$ & $1.42860$\\
\hline
\end{tabular} 
\caption{Optimized violations of the first four inequalities by Fishburn and Reeds~\cite{Fishburn1994}, $T/B'$, compared
to the original violation $T/B$. The explicit coefficients of the corresponding matrix $g'$ are given in the supplemental material. Note
that the given values for $\nu'$ are not necessarily maximal.}
\label{tab:frprime}
\end{table}

\subsection{Optimisation of dimension witnessing Bell inequalities}
The minimal $d'$ for a solution $\alpha$ is a lower bound on the length of the vectors $\vec{v}_i$ and $\vec{w}_j$, which is linked to the
dimension of the observables. For example, if this minimal $d'$ is larger than three, the maximal quantum value of the inequality cannot be
reached using qubits. Let us denote the bound for $d'$-dimensional real vectors by $T_{d'}$. Please note that $B=T_1$.\\
In the previous section, we aimed at increasing the ratio $T/T_1$ by decreasing $T_1$. The same optimisations can be performed for any other
value $d'$ with $T_{d'}<T$.\\
To calculate the bound $T_{d'}$, we are interested in the optimal strategy (optimal "observables") achieving this bound. We note that the
optimal observables of party two are fixed by the ones of party one. The maximum in Eq.~(\ref{eq:tsirelsonbound}) using
Eq.~(\ref{eq:Eundvw}) is achieved, if for all $x_2$ (each column), the vector $\vec{w}_{x_2}$ is parallel to $\sum_{x_1} g_{x_1,x_2}
\vec{v}_{x_1}$, i.e.
\begin{equation}
 \vec{w}_{x_2}=\frac{1}{||\sum g_{x_1,x_2} \vec{v}_{x_1}||} \sum g_{x_1,x_2} \vec{v}_{x_1},
\end{equation}
so the bound simplifies to
\begin{equation}
 T_{d'}(g)= \max_{\vec{v}_{x_1}\in\R^{d'},||\vec{v}_{x_1}||=1} \sum_{x_2=1}^{M_2}  \left|\left|\sum_{x_1=1}^{M_1} g_{x_1,x_2}
\vec{v}_{x_1}\right|\right|.
\end{equation}
As $T_{d'}(g)=T_{d'}(g^T)$, we can assume that $M_1\leq M_2$ without loss of generality. We give an example for such optimisations:

\begin{example}
We optimise inequality $D6_{1}$ in
Ref.~\cite{Gisin2007}. It is the skew left circulant matrix given by the first row $\left(\begin{array}{cccccc}
                                                                                           1 & 0 & 1 & 0 & 1 & 1
                                                                                          \end{array}\right)$, i.e.
\begin{equation}
 {D6}_1 = \left(
\begin{array}{cccccc}
 1 & 0 & 1 & 0 & 1 & 1 \\
 0 & 1 & 0 & 1 & 1 & -1 \\
 1 & 0 & 1 & 1 & -1 & 0 \\
 0 & 1 & 1 & -1 & 0 & -1 \\
 1 & 1 & -1 & 0 & -1 & 0 \\
 1 & -1 & 0 & -1 & 0 & -1 \\
\end{array}
\right)
\end{equation}
A solution $\alpha$ of Theorem~\ref{thm:tightness} is $\alpha=\sqrt{\frac{M}{d}}\1^d$, as it is the case for many circulant (left, right,
skew left, skew right) matrices. See \cite{Karner2003301} for the singular value decomposition of circulant matrices.
We applied modifications (\ref{mods:twist}) and (\ref{mods:shift}) of Corollary~\ref{cor:mods}. We
started with a global random search to find good starting points, which we further optimised by a local optimisation. Both algorithms are
numerical. This led us to the matrix
{\setlength{\mathindent}{0cm}
\begin{equation}
g'=\left(
\begin{array}{cccccc}
 -0.350174 & 0.323788 & 0.344416 & -0.368076 & -0.299221 & 0.31404 \\
 -0.472675 & -0.357842 & -0.182589 & -0.31764 & -0.377403 & 0.215713 \\
 -0.218507 & -0.300642 & -0.525576 & -0.185735 & 0.38952 & 0.279595 \\
 0.39405 & 0.286377 & -0.315566 & -0.315986 & 0.296399 & 0.391561 \\
 0.303896 & 0.37589 & -0.193803 & -0.514786 & -0.310722 & -0.200436 \\
 0.190791 & -0.355309 & -0.321679 & -0.184563 & -0.326631 & -0.511833 \\
\end{array}
\right),
\end{equation}
}
which corresponds to an inequality with a qutrit to qubit ratio of $T/T_3\approx 1.02622$.
This seems to be small. However, we do not know of a higher ratio than $1.03528$ with few settings (see $\mathcal{B}_{X4}$ in
\cite{Vertesi2009}, with $8+4$ settings).
\end{example}

\section{Conclusions}\label{sec:conclusions}
We presented two modifications of the coefficients of bipartite CHSH-type Bell inequalities, which preserve tightness of the Tsirelson bound
$T$ given in \cite{Epping2013}. Physically, they do not affect the optimal observables (up to a relative rotation of the two laboratories).
We applied this method to show that for any relative rotation of the two laboratories, there is a Bell inequality that is maximally violated
for this rotation and a fixed shared quantum state. Furthermore we optimised Bell inequalities with respect to the ratio of the quantum
value and the local hidden
varible bound. Finally we showed how our method can be used to optimise dimension witnessing Bell inequalities, i.e. Bell inequalities,
where the maximal quantum value is not achievable with two qubits.
\ack
We thank N. Gisin for discussions on \cite{Gisin1999}. This project was supported by the DFG, BMBF and SFF of Heinrich-Heine-University
Düsseldorf.
\section*{References}
\bibliographystyle{unsrt}
\bibliography{citations}

\end{document}